\newtheorem{theorem}{Theorem}
\newtheorem{lemma}{Lemma}
\newtheorem{definition}{Definition}
\newtheorem{example}{Example}
\newcolumntype{C}{>{$}c<{$}} 
\newcommand{\secref}[1]{Section~\ref{#1}}
\newcommand{\algoref}[1]{Algorithm~\ref{#1}}
\newcommand{\lemref}[1]{Lemma~\ref{#1}}
\newcommand{\theoref}[1]{Theorem~\ref{#1}}
\newcommand{\defref}[1]{Definition~\ref{#1}}
\newcommand{\figref}[1]{Figure~\ref{#1}}
\newcommand{\stepref}[1]{Step~\ref{#1}}
\newcommand{\tabref}[1]{Table~\ref{#1}}
\newcommand{\RS}[2]{\ensuremath{\mathcal{RS}(#1,#2)}}
\newcommand{\LRC}[4]{\ensuremath{\mathcal{LRC}(#1,#2,#3,#4)}}
\newcommand{\LRCnp}{\ensuremath{\mathcal{LRC}}}
\newcommand{\code}{\ensuremath{\mathcal{C}}}
\newcommand{\ceil}[1]{\ensuremath{\left\lceil #1 \right\rceil}}
\newcommand{\floor}[1]{\ensuremath{\left\lfloor #1 \right\rfloor}}
\newcommand{\F}[1]{\ensuremath{\mathbb{F}_{#1}}}
\newcommand{\uv}[1]{\ensuremath{\mathbf{u}_{#1}}}
\newcommand{\dt}[2]{\ensuremath{\text{d}_{\text{H}} (#1 , #2) }}
\newcommand{\nlc}{\ensuremath{n_{l}}}
\begin{document}
	
\title{List Decoding of Locally Repairable Codes} 

 \author{%
   \IEEEauthorblockN{Lukas Holzbaur, Antonia Wachter-Zeh\thanks{L. Holzbaur's and A. Wachter-Zeh's work was supported by the Technical
   		University of Munich—Institute for Advanced Study,
   		funded by the German Excellence Initiative and European Union Seventh
   		Framework Programme under
   		Grant Agreement No. 291763 and the German Research Foundation (Deutsche
   		Forschungsgemeinschaft,
   		DFG) unter Grant No. WA3907/1-1.}}
   \IEEEauthorblockA{Insitute for Communications Engineering, Technical University of Munich, Germany\\
                     Email: \{lukas.holzbaur, antonia.wachter-zeh\}@tum.de }
 }

\maketitle

\begin{abstract}
  We show that locally repairable codes (LRCs) can be list decoded efficiently beyond the Johnson radius for a large range of parameters by utilizing the local error correction capabilities. The new decoding radius is derived and the asymptotic behavior is analyzed. We give a general list decoding algorithm for LRCs that achieves this radius along with an explicit realization for a class of LRCs based on Reed-Solomon codes (Tamo-Barg LRCs). Further, a probabilistic algorithm for unique decoding of low complexity is given and its success probability analyzed. 
\end{abstract}
\begin{IEEEkeywords}
	locally repairable codes, list decoding, Tamo--Barg codes
\end{IEEEkeywords}

\section{Introduction} \label{sec:introduction	}

\IEEEPARstart{T}{he} ever increasing demand for distributed data storage capacity causes rising interest in coding solutions specifically developed for storage systems. On such a massive scale, unreachable or failed servers are no longer an exception but a regular occurrence and recovery from such events has to be done efficiently. Requiring a subset of participating servers to enable recovery from failures, translates to the coding-theoretical problem referred to as \emph{locality}, where in addition to the distance between all positions of codewords, a distance also has to be guaranteed on subsets of codeword positions. A Singleton-like bound on the achievable distance $d$ was derived in~\cite{Gopalan2011} and generalized in~\cite{Kamath2012}, and constructions achieving it were presented in~\cite{Silberstein2013,Tamo2013,Tamo2014} among others. Naturally, as the distance cannot be higher, the maximum decoding radius for \emph{bounded minimum distance} (BMD) decoding can be achieved by these constructions. However, for different decoding goals it is possible to make further use of the additional local distance. In~\cite{Tamo2013}, it was shown that with an asymptotically diminishing probability of failure, more than $d-1$ erasures can be recovered. 

To our knowledge it is not known how to utilize the additional redundancy coming from the locality when considering (list) decoding of \emph{errors}. A list decoder returns all codewords within a specified distance around the received word. It is known that Reed--Solomon (RS) codes, like all linear codes, can be list decoded up to the Johnson radius \cite{Johnson1962} and an explicit algorithm exists \cite{Guruswami1999}. Though it has been shown that some Reed-Solomon codes can be list decoded beyond this radius~\cite{Rudra2013}, there are no known algorithms to achieve this. Optimal LRCs can be constructed as subcodes of RS codes~\cite{Tamo2014}. In this paper, we show that for a large range of parameters LRCs can be list decoded beyond the Johnson radius while the complexity and list size grow polynomially in the code length, when the number of local repair sets is constant. Further, we give a low complexity probabilistic algorithm and analyze the success probability. Finally, an explicit algorithm for list decoding Tamo--Barg LRCs up to the derived radius is given.

\section{Preliminaries} \label{sec:preliminaries}

\subsection{Notations and Definitions}

Denote by $\F{q}$ a finite extension field of order $q=p^m$, where~$p$ is a prime and $m$ is a positive integer. We write $[n]$ for the set of integers $\{i : 1 \leq i \leq n , i \in \mathbb{Z} \}$.

Let $\code$ be an $[n,k,d]$ code and $H \subset [n]$ be a set of coordinates. Denote by $\code_H$ the code obtained by restricting~$\code$ to the coordinates of $H$.

We define \emph{shortening} of an $[n,k,d]$ code $\code$ in position $i$ by a fixed value $\gamma$ as $\code' = \{c | c\in \code ,  c_i = \gamma \}_{[n] \backslash i}$. 

\subsection{Locally Repairable Codes}

A code is said to have \emph{locality} $r$ if every code symbol can be recovered by accessing the values of at most $r$ other positions, i.e., every code symbol is part of a local code of length $r+1$ and distance $2$. The general case is referred to as $(r,\rho)$ locality.

\begin{definition}[Locality]
	A code $\code$ has $(r,\rho)$ (all-symbol) locality if there exists a partition $\mathcal{H} = \{H_1,H_2,...\}$ of $[n]$, with $H_i \cap H_j = \emptyset \;\; \forall \;\; i,j \in [|\mathcal{H}|], i\neq j$, such that the restriction of the code $\code$ to the coordinates of $H_j$ is a code of length at most $r+\rho-1$ and distance at least $\rho$.
\end{definition}

For local distance $\rho = 2$ a Singleton-like upper bound was shown in \cite{Gopalan2011} and later generalized for $\rho \geq 2$ in \cite{Kamath2012} to
\begin{equation} \label{eq:singletonlrc}
	d \leq n-k+1-\left(\ceil{\frac{k}{r}} -1\right)(\rho-1).
\end{equation}
We call a code that achieves this bound with equality an \emph{optimal} code.

In the following, the restriction $\code_{H_j}$ is referred to as a local code. Only codes with local codes of equal length $n_l = |H_j|$ are considered and we restrict ourselves to codes where $n_l | n$ and $r | k$.
We denote a code of length~$n$, dimension~$k$, locality~$r$ and local distance~$\rho$ by $\LRC{n}{k}{r}{\rho}$.

\section{List Decoding of LRCs} \label{sec:listdecoding}

\subsection{New Decoding Radius}\label{subsec:newdecodingradius}
A code of length~$n$ is called~$(\tau,\ell)$-list decodable if the Hamming sphere of radius~$\tau$ centered at any vector~$v$ of length~$n$ always contains at most~$\ell$ codewords~$c\in  \code$. It is known \cite{Johnson1962} that any code of length~$n$ and distance~$d$ is list decodable up to the Johnson radius 
\begin{equation} \label{eq:johnsonradius}
\tau_J = n-\sqrt{n(n-d)}
\end{equation}
with list size polynomial in~$n$.
We denote the number of list decodable errors, i.e. the largest integer smaller than~$\tau$, by
\begin{align}
t=\ceil{\tau-1},\; \text{where} \;\tau-1 \leq t < \tau \label{eq:definet}.
\end{align}
Generally, it is conjectured that the list size increases exponentially in the code length~$n$ when the radius is at least~\eqref{eq:johnsonradius}. While it is known that there are codes for which the bound is not tight and the list decoding radius exceeds the Johnson radius \cite{Guruswami2006}, the behavior of RS codes is still mostly an open problem \cite{Guruswami2012, Rudra2013}. In the following, we show that the list decoding radius 
of certain LRCs exceed the Johnson radius, i.e., the complexity and list size grow \emph{polynomially} in the length when the number of local repair sets~$\frac{n}{n_l}$ is constant.

\lemref{lemma:sigma} establishes a lower bound~$\sigma$ on the number of locally decodable repair sets as a function of the decoding radius.

\begin{lemma}  \label{lemma:sigma}
	Let~$\code$ be an~$\LRC{n}{k}{r}{\rho}$. Denote by~$\tau_g$ and~$\tau_l$ the any global and local decoding radius and let~$t_g$ and~$t_l$ be defined as in~\eqref{eq:definet}. For a codeword~$c \in \code$ and any word~$w$ with~$\dt{c}{w} \leq t_g$, let~$\mathcal{I}\subseteq \left[\frac{n}{n_l}\right]$ be the set of repair set indices~$i$ with~$\dt{c_{H_i}}{w_{H_i}} \leq t_l, \; \forall \; i \in \mathcal{I}$. The cardinality of~$\mathcal{I}$ is bounded by
	\begin{equation}
	|\mathcal{I}| \geq \sigma = \max\left\{0,\frac{n}{n_l}-\frac{\tau_g}{\tau_l} \right\} .\label{eq:sigmaineq}
	\end{equation}
\end{lemma}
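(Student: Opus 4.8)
The plan is a counting argument over the repair sets, which by definition of $(r,\rho)$ locality partition the coordinate set $[n]$. I would write $B := \frac{n}{n_l} - |\mathcal{I}|$ for the number of ``bad'' repair sets, i.e., the indices $i$ with $\dt{c_{H_i}}{w_{H_i}} > t_l$. Since $|\mathcal{I}| \geq 0$ is immediate, it suffices to prove $B < \frac{\tau_g}{\tau_l}$; the trivial bound $|\mathcal{I}|\geq 0$ then covers the $\max$ with $0$ in~\eqref{eq:sigmaineq} in the regime $\frac{\tau_g}{\tau_l} \geq \frac{n}{n_l}$.

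To bound $B$, I would use that $\mathcal{H} = \{H_1,\dots,H_{n/n_l}\}$ is a partition of $[n]$ into sets of equal size $n_l$ (here $n_l \mid n$), so Hamming distances are additive over it: $\dt{c}{w} = \sum_{i=1}^{n/n_l} \dt{c_{H_i}}{w_{H_i}} \geq \sum_{i \notin \mathcal{I}} \dt{c_{H_i}}{w_{H_i}}$. Each of the $B$ terms in the last sum contributes at least $t_l + 1 \geq \tau_l$ errors, where the second inequality is the bracketing $\tau_l - 1 \leq t_l$ from~\eqref{eq:definet}; hence $\dt{c}{w} \geq B\,\tau_l$. On the other hand, the hypothesis $\dt{c}{w} \leq t_g$ together with $t_g < \tau_g$ from~\eqref{eq:definet} gives $\dt{c}{w} < \tau_g$. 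Chaining, $B\,\tau_l < \tau_g$, so $B < \frac{\tau_g}{\tau_l}$, and substituting back yields $|\mathcal{I}| = \frac{n}{n_l} - B > \frac{n}{n_l} - \frac{\tau_g}{\tau_l}$. Combined with $|\mathcal{I}| \geq 0$, this is exactly~\eqref{eq:sigmaineq}.

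I do not anticipate a real obstacle: the statement is essentially pigeonhole on the repair-set partition, and notably it uses neither the local minimum distance $\rho$ nor the Singleton-type bound~\eqref{eq:singletonlrc}. The only steps needing care are the passages between the real-valued radii $\tau_g,\tau_l$ and the integer error counts $t_g,t_l$ via~\eqref{eq:definet} — in particular that a bad repair set contains at least $\tau_l$ (not merely $t_l$) errors and that $\dt{c}{w} < \tau_g$ — and the bookkeeping that, since all local codes have length $n_l$ with $n_l \mid n$, there are exactly $\frac{n}{n_l}$ repair sets whose supports tile $[n]$ without overlap.
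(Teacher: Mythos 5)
Your proposal is correct and follows essentially the same counting argument as the paper: both bound the number of ``bad'' repair sets by observing that each contributes more than $t_l$, hence at least $t_l+1 \geq \tau_l$, errors to a total of at most $t_g < \tau_g$, giving $B < \tau_g/\tau_l$ (the paper phrases this as $B \leq \floor{t_g/(t_l+1)}$ and then relaxes to $\tau_g/\tau_l$). Your handling of the passage from the integer counts $t_g, t_l$ to the real radii $\tau_g, \tau_l$ matches the paper's chain of inequalities exactly.
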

\begin{IEEEproof}
	Trivially, the cardinality of~$\mathcal{I}$ is non-negative. The maximum number of repair sets~$H_j$ with~$\dt{c_{H_j}}{w_{H_j}} > t_l$ such that~$\dt{c}{w} = \sum_{j=1}^{\frac{n}{n_l}} \dt{c_{H_j}}{r_{H_j}} \leq t_g$ is given by~$\floor{\frac{t_g}{t_l+1}}$. Subtracting from the total number of repair sets~$\frac{n}{n_l}$ gives 
	\begin{align*}
	\frac{n}{n_l}-\floor{\frac{t_g}{t_l+1}} \geq \frac{n}{n_l}-\frac{t_g}{t_l+1} > \frac{n}{n_l}-\frac{\tau_g}{\tau_l} .
	\end{align*}
	
	\vspace{-1ex}
\end{IEEEproof}

The following theorem provides our main statement.

\begin{theorem}[List Decoding of LRCs] \label{theo:ListDecodingLRCs}
	Let $\ell_{(n,d,\tau)}$ denote the maximum list size when list decoding an~$[n,k,d]$ code with radius~$\tau$. An~$\LRC{n}{k}{r}{\rho}$ is~$(\tau_g,\ell_g)$-list decodable, with
	\begin{equation}
	\tau_g = \left\{
	\begin{array}{ll}
	\frac{d}{\rho} \cdot \tau_l \, , &\text{if}\;\; \sigma>0 \\
	n-\sqrt{n(n-d)}\, , & \text{else}
	\end{array} \right.   \label{eq:jblrc}
	\end{equation}
	and
	\begin{equation}
		\ell_g \leq \binom{\frac{n}{n_l}}{\sigma} \ell_{(n_l,\rho,\tau_l)}^\sigma \ell_{(n-\sigma n_l,d,\tau_g)} ,
	\end{equation}
	where $\tau_l$ is the Johnson radius of the local codes.
\end{theorem}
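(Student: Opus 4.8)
The plan is to prove the claim by a covering argument on the codewords inside a fixed Hamming ball. Let~$w$ be an arbitrary word and let~$N$ be the number of codewords~$c\in\code$ with~$\dt{c}{w}\leq t_g$; the goal is to show~$N\leq\ell_g$. The case~$\sigma=0$ is immediate: there~$\tau_g$ is by definition the Johnson radius of~$\code$, so~$N\leq\ell_{(n,d,\tau_g)}$, which is exactly the asserted bound because~$\binom{n/n_l}{0}=1$ and~$\ell_{(n_l,\rho,\tau_l)}^{0}=1$. So assume~$\sigma>0$, in which case~$\tau_g=\frac{d}{\rho}\tau_l$ and, by~\eqref{eq:sigmaineq}, $\sigma=\frac{n}{n_l}-\frac{\tau_g}{\tau_l}$.

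Next I would apply~\lemref{lemma:sigma} to each such~$c$: it produces a set of at least~$\sigma$ repair-set indices on which~$c$ agrees with~$w$ up to~$t_l$ local errors ($t_l$ being obtained from the local Johnson radius~$\tau_l$ via~\eqref{eq:definet}). Fix a~$\sigma$-subset~$S$ of these indices (rounded up to~$\ceil{\sigma}$ if necessary) and label~$c$ by the pair consisting of~$S$ and the tuple~$(c_{H_i})_{i\in S}$ of its restrictions to the chosen repair sets. The count of labels, and of codewords per label, then splits into three factors. First, there are~$\binom{n/n_l}{\sigma}$ possible index sets~$S$. Second, for a fixed~$S$ each block~$c_{H_i}$, being within~$t_l$ of~$w_{H_i}$, appears in the radius-$\tau_l$ list-decoding output of the local code~$\code_{H_i}$ on input~$w_{H_i}$; hence there are at most~$\ell_{(n_l,\rho,\tau_l)}$ choices per block and at most~$\ell_{(n_l,\rho,\tau_l)}^{\sigma}$ tuples. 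Third, all codewords carrying one fixed label coincide on the~$\sigma n_l$ coordinates~$\bigcup_{i\in S}H_i$, so their restrictions to the remaining~$n-\sigma n_l$ coordinates all lie in one coset of the code obtained by shortening~$\code$ to zero on those positions — a code of minimum distance at least~$d$ — and since~$\dt{c}{w}\leq t_g$, each such restriction is within~$t_g$ of the corresponding restriction of~$w$. Thus at most~$\ell_{(n-\sigma n_l,d,\tau_g)}$ codewords carry a given label, and~$N$ is bounded by the product of the three factors, i.e.\ by~$\ell_g$.

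The step that makes the third factor legitimate — and the main obstacle — is checking that this shortened residual code is still list-decodable up to radius~$\tau_g$ by the Johnson bound, i.e.\ that~$t_g<\tau_g$ does not exceed the Johnson radius of a length-$(n-\sigma n_l)$ code of distance at least~$d$. Substituting~$\sigma=\frac{n}{n_l}-\frac{\tau_g}{\tau_l}$ gives residual length~$n-\sigma n_l=\frac{n_l\tau_g}{\tau_l}$, and inserting~$\tau_l=n_l-\sqrt{n_l(n_l-\rho)}$ together with~$\tau_g=\frac{d}{\rho}\tau_l$ simplifies this to~$\frac{n_l d}{\rho}$. A short calculation then shows that the Johnson radius~$\frac{n_l d}{\rho}-\sqrt{\frac{n_l d}{\rho}\left(\frac{n_l d}{\rho}-d\right)}$ of a code of length~$\frac{n_l d}{\rho}$ and distance~$d$ equals exactly~$\frac{d}{\rho}\tau_l=\tau_g$; so the choice~$\tau_g=\frac{d}{\rho}\tau_l$ is precisely the value at which the number of ``good'' repair sets guaranteed by~\lemref{lemma:sigma} just suffices to reduce the problem to one the Johnson bound still covers (with equality). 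I would finish by remarking that when~$\sigma$ has to be rounded up to~$\ceil{\sigma}$, the residual code only becomes shorter; since the Johnson radius is decreasing in the block length for a fixed distance, $\tau_g$ then lies strictly inside it, and the whole bound still holds with~$\sigma$ replaced by~$\ceil{\sigma}$ throughout.
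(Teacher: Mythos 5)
Your proposal is correct and follows essentially the same route as the paper's proof: \lemref{lemma:sigma} supplies the $\sigma$ locally decodable repair sets, the code is shortened on those $\sigma n_l$ positions, and the list size is bounded by the identical three-factor product $\binom{n/n_l}{\sigma}\,\ell_{(n_l,\rho,\tau_l)}^{\sigma}\,\ell_{(n-\sigma n_l,d,\tau_g)}$. The only cosmetic difference is that you \emph{verify} $\tau_g=\frac{d}{\rho}\tau_l$ by computing the Johnson radius of the residual length-$\frac{n_l d}{\rho}$ code directly (and handle the rounding of $\sigma$ a bit more explicitly), whereas the paper \emph{derives} the same value as the nonzero root of the quadratic obtained after substituting the bound on $\sigma$.
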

\begin{IEEEproof}
	By \lemref{lemma:sigma}~$\dt{c_{H_i}}{w_{H_i}} \leq t_l$ holds for at least~$\sigma$ repair sets. These repair sets can be decoded locally and the code can be shortened by these~$\sigma n_l$ positions to an~$(n-\sigma n_l,k-\sigma n_l,d)$ code. The Johnson radius of this code is given by the largest~$\tau_g$ that fulfills
	\begin{align}
	0 &< (n-\sigma n_l-\tau_g)^2 -(n-\sigma n_l) (n-\sigma n_l -d)  \label{eq:tausiglem}\\
	&=(n-\tau_g)^2-n(n-d)+\sigma n_l(2\tau_g-d)  .\label{eq:tauflem} 
	\end{align}
	This is an increasing function in~$\sigma$ as long as~$2 \tau_g \geq d$ (i.e., when BMD decoding is not possible). With~\eqref{eq:sigmaineq} for~$\sigma >0$ it follows that any~$\tau_g \geq d/2$ that fulfills
	\begin{align}
	0&< \left(n-\left(\frac{n}{\nlc} - \frac{\tau_g}{\tau_l}\right) n_l-\tau_g\right)^2 \nonumber \\
	\quad &-\left(n-\left(\frac{n}{\nlc} - \frac{\tau_g}{\tau_l}\right) n_l\right) \left(n-\left(\frac{n}{\nlc} - \frac{\tau_g}{\tau_l}\right) n_l -d\right) \nonumber\\
	&=\left(\frac{\tau_g (n_l-\tau_l)}{\tau_l} \right)^2 - \left(\frac{n_l \tau_g}{\tau_l} \right)^2 +\frac{d n_l \tau_g}{\tau_l} \nonumber\\
	&=\tau_g^2\frac{-2 n_l +\tau_l}{\tau_l} +\tau_g \frac{d n_l}{\tau_l} \label{eq:tausigextlem}
	\end{align}
	also fulfills~\eqref{eq:tausiglem}.
	From the derivative in~$\tau_g$ and~$\frac{d n_l}{\tau_l} > 0$, it follows that the inequality holds for all values between the two roots~$\tau_{1,2}$ of this function in~$\tau_g$. The roots are~$\tau_1 = 0$ and 
	\begin{align*}
	\tau_2 &= \frac{n_l d}{2n_l-\tau_l} = \frac{d}{2 - \frac{\tau_l}{n_l}} = \frac{d \tau_l}{2\tau_l-\frac{\tau_l^2}{n_l}}\\
	&\stackrel{(a)}{=} \frac{d \tau_l}{2(n_l-\sqrt{n_l(n_l-\rho)})-\left(\frac{n_l^2 - 2n_l\sqrt{n_l(n_l-\rho)} +n_l(n_l-\rho)}{n_l}\right)} \\
	&= \frac{d}{\rho} \cdot \tau_l,
	\end{align*}
	where $(a)$ follows from replacing $\tau_l$ in the denominator by the Johnson radius for the local code.
	Since~\eqref{eq:tausigextlem} only holds if~$\tau_g\geq d/2$, the decoding radius is
	\begin{align}
	\frac{d}{2} \leq \tau_g &< \frac{d}{\rho} \cdot\tau_l .
	\end{align}
	There are at most~$\binom{\frac{n}{n_l}}{\sigma}$ choices of~$\sigma$ list decodable repair sets and each of these choices gives at most~$\ell_{(n_l,\rho,\tau_l)}^\sigma$ distinct possibilities to shorten the received word. The list size of each shortened code is upper bounded by~$\ell_{(n-\sigma n_l,d,\tau_g)}$ and the upper bound on the global maximum list $\ell_g$ follows.
\end{IEEEproof}

\begin{example}
Let~$\code$ be an~$\LRC{63}{16}{8}{14}$ optimal locally repairable code achieving~\eqref{eq:singletonlrc} with equality, i.e.,~$d=35$. It follows that BMD decoding corrects for up to~$t_{\text{BMD}} = 17$ errors uniquely and with~\eqref{eq:johnsonradius} we get a list decoding radius of~$\tau < 21$, i.e.,~$t = 20$. Using the principle from~\theoref{theo:ListDecodingLRCs}, with~\eqref{eq:jblrc} we obtain~$\tau_g < 22.18$, i.e.,~$t_g = 22$. Hence, two additional errors can be corrected.
\end{example}

\subsection{List Decoding Algorithm} \label{subsec:listdecodingalgo}

To achieve the decoding radius of \theoref{theo:ListDecodingLRCs}, several steps have to be taken sequentially, as shown in \algoref{algo:fanta}. While \lemref{lemma:sigma} guarantees that at least~$\sigma$ repair sets can be decoded, it does not guarantee that all repair sets for which the local decoder is able to return a local codeword are decoded correctly. For this reason, all combinations of seemingly correct local repair sets have to be tried in order to guarantee finding the correct one. 

\algoref{algo:fanta} can be improved in terms of complexity, e.g., by considering the number of errors corrected in the local codes and decreasing the decoding radius of the shortened code accordingly. However, as this is not the focus of this work, such performance optimizations are not considered here.

\begin{algorithm}
	
	\KwData{Received word~$w=c+e$ with~$c\in \LRC{n}{k}{r}{\rho}$}
	\KwResult{List of codewords within radius~$\tau_g$ of~$w$}
	
	\ForEach{Local code}{
		Decode up to~$\tau_l$~$\Rightarrow$~$\xi \geq \sigma$ repair sets with~$\ell_l \geq 1$}
	\ForEach{of the~$\binom{\xi}{\sigma}$ combinations of local repair sets with~$\ell_l \geq 1$ \label{step5}}{
		\ForEach{combination of codewords in the current~$\sigma$ local lists \label{step6}}{
			Shorten~$w$ and decode as~$(n-\sigma n_l,d)$ code up to radius~$\tau_g$} 
		}
		
		Return all codewords~$c$ with~$\dt{w}{c} \leq t_g$

	\caption{List Decoder} \label{algo:fanta}
\end{algorithm}

\algoref{algo:fanta} gives a description of the decoding steps. Its complexity is polynomial in $n$ when the number of repair sets~$\frac{n}{n_l}$ is constant, as~$\xi= O(n^{\frac{n}{n_l}})$ grows exponentially otherwise.

\subsection{Probabilistic Unique Decoder} \label{subsec:probdec}
Even for a moderate number of local repair sets, the worst case complexity of \algoref{algo:fanta} can be rather high. In \stepref{step5} all combinations of corrected local repair sets have to be tried because an undetected error event might occur, i.e., a local code might return a list with~$\ell_l >0$ that does not contain the correct codeword. Further, in \stepref{step6}, all combinations of the codewords in the local lists have to be tried to guarantee finding one that consists only of correct local codewords. It follows that whether these steps are required depends on the probability of the local list size being larger than one and on the probability of a local list with~$\ell>0$ not containing the correct local codeword. 

We can define a \emph{probabilistic unique decoder} by requiring that all local decoders return a list of size one. 
The decoding complexity is then reduced to performing the local decoding steps, shortening, and global list decoding only once.
\begin{theorem}[Probabilistic Decoding]
	An~$\LRC{n}{k}{r}{\rho}$ can be uniquely decoded up to radius~$\tau_g$ of~\eqref{eq:jblrc} with success probability
	\begin{equation}
		P_{suc} \geq P_{F} Pr\{\ell_{(n_l,\rho,\tau_l)} = 1 \}^\sigma Pr\{\ell_{(n-\sigma n_l,d,\tau_g)} = 1\}, \label{eq:sucprob}
	\end{equation}
	where $P_{F} = \sum_{i=0}^{\frac{n}{n_l}-\sigma} P_E^i (1-P_E )^{\frac{n}{n_l}-\sigma -i} \frac{1}{\binom{\sigma+i}{\sigma}}$ and $P_E$ denotes the maximum probability that a local codeword is within distance~$t_l$ for any number of errors~$>t_l$.
\end{theorem}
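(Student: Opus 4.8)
The plan is to analyze the single-pass decoder described just before the statement and lower-bound the probability that it outputs the transmitted codeword $c$: run each of the $n/\nlc$ local list decoders of radius $\tau_l$ once, keep those returning a list of size one as the ``seemingly correct'' repair sets, pick a uniformly random $\sigma$-subset of them, shorten $w$ at those $\sigma\nlc$ positions using the returned local codewords, run a single global list decoder of radius $\tau_g$ on the resulting $(n-\sigma\nlc,k-\sigma\nlc,d)$ code, and output a codeword only if that list is a singleton. The fact that the guaranteed radius equals the $\tau_g$ of \eqref{eq:jblrc} is inherited verbatim from the shortening argument of \theoref{theo:ListDecodingLRCs}: once $\sigma$ repair sets have been decoded correctly and removed, the residual code has length $n-\sigma\nlc$ and distance $d$, so a list decoder of radius $\tau_g$ is guaranteed to contain the residual restriction of $c$ in its output list. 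Hence it remains to bound the probability that (a) the $\sigma$ removed repair sets were all decoded correctly and (b) the residual global list has size exactly one.

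First I would fix the error vector $e=w-c$ with $\dt{c}{w}\le t_g$ and split the success event into three parts analyzed separately. By \lemref{lemma:sigma} at least $\sigma$ repair sets carry at most $\tlc$ errors; take the worst case that exactly $\sigma$ do, since additional ones only help. For each such set the transmitted local codeword lies within radius $\tlc$, hence in the local list, so the decoder returns it precisely when that list has size one, an event of probability at least $\Pr\{\ell_{(\nlc,\rho,\tau_l)}=1\}$; treating the repair sets independently, all $\sigma$ of them succeed with probability at least $\Pr\{\ell_{(\nlc,\rho,\tau_l)}=1\}^{\sigma}$. Symmetrically, conditioned on having shortened by $\sigma$ correctly decoded repair sets, the residual list contains the true residual codeword and is a singleton with probability at least $\Pr\{\ell_{(n-\sigma\nlc,d,\tau_g)}=1\}$. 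Multiplying these by the probability $P_F$ that the random $\sigma$-subset lands entirely inside the correctly decoded repair sets yields \eqref{eq:sucprob}.

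The crux is the factor $P_F$. Among the at most $n/\nlc-\sigma$ repair sets with more than $\tlc$ errors, call one a \emph{false positive} when its local list is nonempty (it then contains only wrong codewords); by the definition of $P_E$ this has probability at most $P_E$ for each such set. Conditioning on there being exactly $i$ false positives, the pool of seemingly correct repair sets has size $\sigma+i$ in the worst case, so a uniformly random $\sigma$-subset is free of false positives with probability $\binom{\sigma+i}{\sigma}^{-1}$. Treating the false-positive events across repair sets independently, the number of false positives is stochastically dominated by a $\mathrm{Binomial}(n/\nlc-\sigma,P_E)$ variable, so, since $\binom{\sigma+i}{\sigma}^{-1}$ is nonincreasing in $i$ and $\binom{n/\nlc-\sigma}{i}\ge 1$, averaging gives
\[
\Pr[\text{selection succeeds}]\;\ge\;\sum_{i=0}^{n/\nlc-\sigma} P_E^{i}(1-P_E)^{n/\nlc-\sigma-i}\binom{\sigma+i}{\sigma}^{-1}\;=\;P_F ,
\]
which is the claimed expression.

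The step I expect to be the main obstacle is this last one: one must argue that a worst-case error pattern cannot simultaneously maximize the number of false positives \emph{and} correlate their identities with the decoder's random choice, and justify the passage from the true, generally dependent distribution of false positives to the product-form bound---i.e.\ that $P_E$ may legitimately be used as an independent per-repair-set quantity here. The other two factors are routine once the shortening argument of \theoref{theo:ListDecodingLRCs} is invoked, but one should still check the boundary cases $\sigma=0$, where $P_F=1$ and \eqref{eq:sucprob} degenerates to the Johnson-radius guarantee for the whole code, and $\sigma=n/\nlc$, as well as confirm that demanding local list size \emph{exactly} one (rather than at most one) at every repair set is consistent with the independence used in part (a).
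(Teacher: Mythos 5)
Your proposal follows the same route as the paper's (very terse) proof: lower-bound the success probability by the product of the probability $P_F$ that none of the $\sigma$ shortened repair sets suffered an undetected error event with the probabilities that the $\sigma$ local lists and the single global list are singletons containing the true (restricted) codeword. You in fact supply details the paper omits --- notably that dropping the binomial coefficient $\binom{n/n_l-\sigma}{i}$ and passing through stochastic dominance still yields a valid lower bound --- and the independence caveat you flag is exactly the assumption the paper itself makes implicitly, so there is no gap relative to the published argument.
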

\begin{IEEEproof}
	$P_F$ bounds the probability that no repair set with an undetected error event is one of the~$\sigma$ repair sets which are removed by shortening. By \lemref{lemma:sigma}, undetected error events can occur for at most~$\frac{n}{n_l} -\sigma$ repair sets.~$P_F$ sums over the number of possible undetected error events and weights the probability of that number of undetected error events happening, with the probability of choosing none of them for the~$\sigma$ shortened repair sets. If no undetected error events are within the~$\sigma$ repair sets used for shortening, the result is unique if all list decoders have a list size of~$1$, giving \eqref{eq:sucprob}.
\end{IEEEproof}
For the probabilistic decoder the constraint of~$\frac{n}{n_l} = const.$ can be lifted, as its complexity grows only linear with the number of local repair sets. Further, we note that the bound on the success probability is pessimistic, as it does not consider the distribution of errors, but only the probability of success for the worst case distribution of errors. However, even this bound is close to~$1$ in many cases (see \secref{subsec:ProbTB}).

\subsection{Asymptotic Behavior}
When considering codes without locality, the asymptotic behavior is usually characterized by regarding the normalized decoding radius over the normalized distance. For codes with locality the distance depends not only on the length and dimension, but also the locality~$r$ and local distance~$\rho$, which yields different views on the asymptotic behavior.\\ 
Consider an optimal~$\LRC{n}{k}{r}{\rho}$ code with~$r|k$ and~${(r+\rho-1)|n}$. By~\eqref{eq:singletonlrc} the code rate is given by:
\begin{equation}
	R = \left(1-\frac{d}{n} + \frac{\rho}{n} \right) \frac{r}{r+\rho-1} 
	= \left(1-\frac{d}{n} + \frac{\rho}{n_l} \frac{n_l}{n} \right) R_l , \label{eq:rate}
\end{equation}
where~$R_l$ denotes the rate of the local codes. It follows that the rate~$R$ only depends on the normalized distance~$\frac{d}{n}$, the local normalized distance~$\frac{\rho}{n_l}$, the number of repair sets~$\frac{n}{n_l}$ and the local rate~$R_l$.

In this paper, we scale~$r$ and~$\rho$ such that the number of local repair sets~$\frac{n}{n_l}$ is constant, as well as the ratio~$\frac{d}{\rho}$ between local and global distance.
\figref{fig:cw_asympt} gives a graphical illustration of this scaling, where~$(a)$ depicts a short codeword and~$(b)$ and~$(c)$ depict codewords of longer codes. Note that, as indicated by the marked redundancy, the short code has the same normalized distance as the other two. The difference between~$(b)$ and~$(c)$ is due to the scaling of the parameters, where for~$(b)$ the local distance and the repair set size are the same as in $(a)$ while for~$(c)$ both scale with~$n$. 
We are interested in the latter, which can be interpreted in several ways, e.g. assume each repair set corresponds to a data center and the codeword symbols are distributed over several servers. Adapting the code to an increasing number of servers in each data center corresponds to increasing the size of each repair set while keeping the normalized distance (local storage overhead) constant. Thus, we characterize LRCs asymptotically by a fixed relation~$\beta = \frac{n \rho}{n_l d}$ between the normalized local and global distance. 
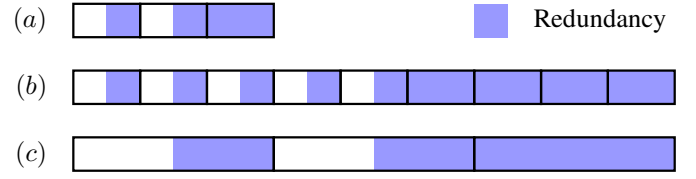
\begin{figure}

\def\x{0.05\columnwidth} 

\begin{tikzpicture}

	\node[draw=none,anchor = east] at (\x*-0.5,\x*4.5)  {$(a)$};
	\node[draw=none,anchor = east] at (\x*-0.5,\x*2.5)  {$(b)$};
	\node[draw=none,anchor = east] at (\x*-0.5,\x*0.5)  {$(c)$};
	
	\fill[blue!40!white] (\x*12,\x*5) rectangle (\x*13,\x*4);
	\node[draw=none,anchor = west] at (\x*13.5,\x*4.5)  {Redundancy};
	
	\fill[blue!40!white] (\x*1,\x*5) rectangle (\x*2,\x*4);
	\fill[blue!40!white] (\x*3,\x*5) rectangle (\x*4,\x*4);
	\fill[blue!40!white] (\x*4,\x*5) rectangle (\x*6,\x*4);
	
	\draw[line width = 0.3mm] (\x*0,\x*5) rectangle (\x*2,\x*4);
	\draw[line width = 0.3mm] (\x*2,\x*5) rectangle (\x*4,\x*4);
	\draw[line width = 0.3mm] (\x*4,\x*5) rectangle (\x*6,\x*4);

	\fill[blue!40!white] (\x*1,\x*3) rectangle (\x*2,\x*2);
	\fill[blue!40!white] (\x*3,\x*3) rectangle (\x*4,\x*2);
	\fill[blue!40!white] (\x*5,\x*3) rectangle (\x*6,\x*2);
	\fill[blue!40!white] (\x*7,\x*3) rectangle (\x*8,\x*2);
	\fill[blue!40!white] (\x*9,\x*3) rectangle (\x*10,\x*2);
	\fill[blue!40!white] (\x*10,\x*3) rectangle (\x*18,\x*2);
	
	\draw[line width = 0.3mm] (\x*0,\x*3) rectangle (\x*2,\x*2);
	\draw[line width = 0.3mm] (\x*2,\x*3) rectangle (\x*4,\x*2);
	\draw[line width = 0.3mm] (\x*4,\x*3) rectangle (\x*6,\x*2);
	\draw[line width = 0.3mm] (\x*6,\x*3) rectangle (\x*8,\x*2);
	\draw[line width = 0.3mm] (\x*8,\x*3) rectangle (\x*10,\x*2);
	\draw[line width = 0.3mm] (\x*10,\x*3) rectangle (\x*12,\x*2);
	\draw[line width = 0.3mm] (\x*12,\x*3) rectangle (\x*14,\x*2);
	\draw[line width = 0.3mm] (\x*14,\x*3) rectangle (\x*16,\x*2);
	\draw[line width = 0.3mm] (\x*16,\x*3) rectangle (\x*18,\x*2);

	\fill[blue!40!white] (\x*3,\x*1) rectangle (\x*6,\x*0);
	\fill[blue!40!white] (\x*9,\x*1) rectangle (\x*12,\x*0);
	\fill[blue!40!white] (\x*12,\x*1) rectangle (\x*18,\x*0);
	
	\draw[line width = 0.3mm] (\x*0,\x*1) rectangle (\x*6,\x*0);
	\draw[line width = 0.3mm] (\x*6,\x*1) rectangle (\x*12,\x*0);
	\draw[line width = 0.3mm] (\x*12,\x*1) rectangle (\x*18,\x*0);
	
\end{tikzpicture}
	\vspace{-15pt}
	\caption{Illustration of asymptotic scaling of parameters}
	\vspace{-10pt}
	\label{fig:cw_asympt}
\end{figure}

To compare our list decoding radius~\eqref{eq:jblrc} with the Johnson radius~\eqref{eq:johnsonradius}, rewrite
\begin{equation}
\frac{\rho}{n_l} = \beta \cdot \frac{d}{n} . \label{eq:rhonl3}
\end{equation}
For the normalized increased decoding radius it holds that
\begin{align}
\frac{\tau_g}{n} &= \frac{d\tau_l}{n \rho} = \frac{d}{n}\frac{n_l-\sqrt{n_l(n_l-\rho)}}{\rho} \nonumber\\
&= \frac{d}{n} \frac{n_l}{\rho} \frac{1-(1-\frac{\rho}{n_l})}{1+\sqrt{1-\frac{\rho}{n_l}}} = \frac{d}{n} \frac{1}{1+\sqrt{1-\frac{\rho}{n_l}}} \nonumber \\
&= \frac{d}{n} \frac{1}{1+\sqrt{1-\beta \frac{d}{n}}}, \quad\text{with} \; \beta \cdot \frac{d}{n} \leq 1. \label{eq:asympnorm}
\end{align}
Thus, the normalized decoding radius of the global code given by~\eqref{eq:asympnorm} depends only on the normalized distance of the code and the normalized decoding radius of the local code.
In \figref{fig:asympplot} the normalized decoding radii are compared for different values of~$\beta$. When the rate of the local and the global code are equal, i.e.,~$\beta = 1$, the radius equals the alphabet-free Johnson radius~\eqref{eq:johnsonradius}. For any~$\beta>1$ our decoding radius provides a gain up to the point where~$\beta \frac{d}{n}= 1$ and the curves meets the Singleton bound.
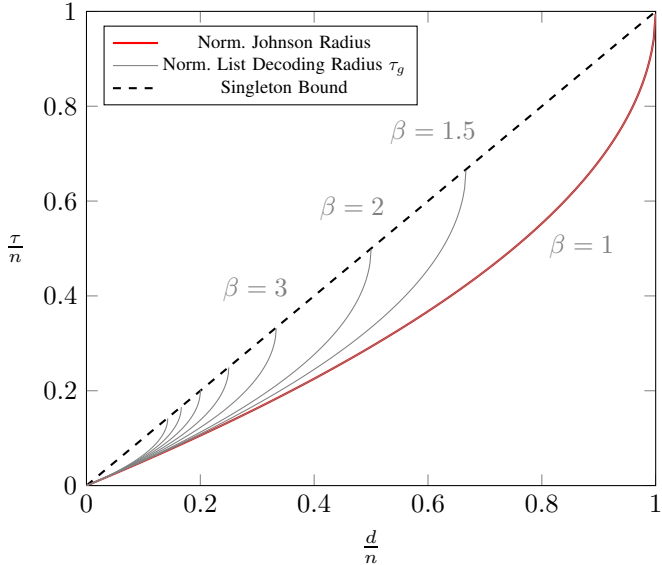
\begin{figure}[h] 
	\vspace{-10pt}
	\begin{tikzpicture}
\pgfplotsset{compat = 1.3}
\begin{axis}[
	legend style={nodes={scale=0.7, transform shape}},
	width = 1.03\columnwidth,
	xlabel = $\frac{d}{n}$,
	xlabel style = {nodes={scale=0.8, transform shape}},
	ylabel = $\frac{\tau}{n}$,
	ylabel style={rotate=-90,nodes={scale=0.85, transform shape}},
	xmin = 0,
	xmax = 1,
	ymin = 0,
	ymax = 1,
	legend pos = north west]

\addplot[color=red,
		domain = 0:1,
		samples = 300,
		thick]
		{1-sqrt(1-x)};
\addlegendentry{Norm. Johnson Radius}

\addplot[color=gray,
domain = 0:1,
samples = 300]
{x*(1/(2-(1-sqrt(1-min(1*x,1))))}
node[pos=0.66, anchor=north west] {$\beta = 1$};
\addlegendentry{Norm. List Decoding Radius $\tau_g$}

\addplot[color=black,
domain=0:1,
samples=2,
dashed,
thick]
{x};
\addlegendentry{Singleton Bound}

\addplot[color=gray,
domain = 0:1/1.5,
samples = 300]
{x*(1/(2-(1-sqrt(1-min(1.5*x,1))))};

\addplot[color=gray,
domain = 0:1/2,
samples = 300]
{x*(1/(2-(1-sqrt(1-min(2*x,1))))};

\addplot[color=gray,
domain = 0:1/3,
samples = 300]
{x*(1/(2-(1-sqrt(1-min(3*x,1))))};

\addplot[color=gray,
domain = 0:1/4,
samples = 300]
{x*(1/(2-(1-sqrt(1-min(4*x,1))))};

\addplot[color=gray,
domain = 0:0.201,
samples = 300]
{x*(1/(2-(1-sqrt(1-min(5*x,1))))};

\addplot[color=gray,
domain = 0:1/6,
samples = 300]
{x*(1/(2-(1-sqrt(1-min(6*x,1))))};

\addplot[color=gray,
domain = 0:1/7,
samples = 300]
{x*(1/(2-(1-sqrt(1-min(7*x,1))))};

\addplot[draw = none,
color=gray,
domain = 0:1,
samples = 2]
{x}
node[pos=0.7, anchor=south east] {$\beta = 1.5$}
node[pos=0.54, anchor=south east] {$\beta = 2$}
node[pos=0.37, anchor=south east] {$\beta = 3$};

\end{axis}
\end{tikzpicture} \vspace{-20pt}
	\caption{Normalized list decoding radius~$\tau_g$, see~\eqref{eq:jblrc}, with local decoding up to the Johnson radius and~$\beta = \frac{n \rho}{d n_l}$, compared to the normalized global Johnson radius}
	\vspace{-15pt}
	\label{fig:asympplot}
\end{figure}

\section{Decoder for Tamo-Barg LRCs} \label{sec:tamobarg}

\algoref{algo:fanta} provides a decoding procedure up to the radius of \eqref{eq:jblrc}. To be feasible, it requires an efficient list decoding algorithm of the global and local code, as well as an efficient way to shorten the code by known positions. While shortening is a commonly used way to decrease the length of a code, it is usually done at the encoder, where it suffices to set information symbols to zero. To shorten a code by some known positions at the decoder, all codewords that differ in the known positions can be removed from the codebook. While this gives a code of desired distance and dimension, the structure of the code is lost and it is unclear how to decode in this newly obtained code. This section addresses this problem for RS codes and shows how to efficiently apply \algoref{algo:fanta} to list decoding the Singleton-optimal RS-like codes by Tamo and Barg~\cite{Tamo2014}.

\subsection{Tamo--Barg Family of Optimal LRCs}

An $\RS{n}{d}$ Reed--Solomon code of length $n$ and distance $d$ over a field $\F{q}$ is defined as the evaluation of all polynomials $f(x) \in \F{q}[x]$ of degree $\leq k-1$ in a set $\mathcal{A} = \{\alpha_0, \alpha_1,..., \alpha_{n-1}\}$ of $n \leq q$ distinct elements of $\F{q}$. It is well known that RS codes are \emph{maximum distance separable} (MDS), i.e., have a distance of $d=n-k+1$. 

In \cite{Tamo2014} a new family of LRCs was introduced, which achieves the Singleton-like bound \eqref{eq:singletonlrc} on the distance for codes with locality and can also be defined by polynomial evaluation.
\begin{definition}[Tamo--Barg LRCs, {\cite[Constr. 8]{Tamo2014}}] \label{def:lrc8}
	Let there be a partition $\mathcal{H} = \left[H_1,...,H_{\frac{n}{n_l}}\right]$ with $|H_i| = r+\rho-1$ of a set $A \subset \F{q}$ with $|A| = n$ and a polynomial $g(x)$ of degree $r+\rho-1$ for which $g(\alpha_j) = \beta_i \; , \; \forall \; \alpha_j \in H_i$. The $\LRC{n}{k}{r}{\rho}$ code is given by the evaluation polynomial
\begin{equation}
f_u(x) = \sum_{\substack{i=0 \\ i \!\!\mod (r+\rho-1) = 0,...,r-1}}^{k-1+\left(\frac{k}{r}-1\right)(\rho-1)} \hspace{-20pt} u_i g(x)^{\floor{\frac{i}{r+\rho-1}}} x^{i\!\!\mod (r+\rho-1)} \label{eq:lrc8}
\end{equation}
and the evaluation map
\begin{align*}
	&\F{q}^k \rightarrow \F{q}^n \\
	&\uv{} \mapsto \text{ev}(\uv{}) = [f_u(\alpha_0),f_u(\alpha_1),...,f_u(\alpha_{n-1})] .
\end{align*}
\end{definition}
The polynomial in \eqref{eq:lrc8} fulfills $\deg(f_u(x)) \leq k-1+\left(\ceil{\frac{k}{r}}-1\right)(\rho-1)$ and it follows that $\LRC{n}{k}{r}{\rho} \subseteq \RS{n}{k+\left(\ceil{\frac{k}{r}}-1\right)(\rho-1)}$. We refer to this RS code containing the LRC as its \emph{supercode}. It follows that $\code$ can be decoded globally as an RS code, a well-known class of codes for which a large number of decoders exist, including the Guruswami--Sudan list decoder \cite{Guruswami1999}, which can decode errors up to the Johnson radius. Further, each local repair set is an $\RS{r}{\rho}$ code with a linear combination of the entries of $\mathbf{u}$ as message and can therefore also be efficiently (list-) decoded up to the local Johnson radius.

\subsection{List decoding Tamo-Barg LRCs}

\algoref{algo:fanta} consists of three major steps: decoding locally, shortening the code, and decoding the shortened code. As the local codes of \defref{def:lrc8} are RS codes, we can list decode the $[r+\rho-1,r,\rho]$ local codes up to the Johnson radius~\eqref{eq:johnsonradius}.
For shortening, denote the number of positions in a word $w=c+e$ with $c\in \code$ that are known to be free of error by $\delta$. The $[n,k,d]$ code $\code$ can be shortened by removing all codewords from the codebook that differ from $w$ in these positions. The obtained code is an $(n-\delta,k-\delta,d)$ code which is in general non-linear. Further, the structure of the shortened code is generally unknown, making efficient decoding difficult. To obtain a \emph{linear and structured} shortened code, we give a bijective map from the $(n-\delta,k-\delta,d)$ code to an $\RS{n-\delta}{d}$ code. For ease of notation we define the following polynomials and a corresponding set.
\begin{definition}\label{def:polred}
	For a polynomial $f(x)$ and a set $\mathcal{A} = \{\alpha_0,\alpha_1,\cdots, \alpha_{n-1}\}$, define:
	\begin{align*}
	f^{(0)}(x) &= \frac{f(x)- f(\alpha_{0})}{(x-\alpha_{0})},
	f^{(s)}(x) = \frac{f^{(s-1)}(x)-f^{(s-1)}(\alpha_{s})}{x-\alpha_{s}},
	\end{align*}
	and $\mathcal{A}^{(s)} = \{\alpha_{s+1},\alpha_{s+2},\cdots,\alpha_{n-1}\}$.
\end{definition}

\begin{lemma} \label{lemma:RSred}
	Let $\RS{n}{d}$ be the evaluation of all polynomials $f(x)$ with $\deg(f(x)) \leq k-1$ on the set $\mathcal{A} = \{\alpha_0,\cdots, \alpha_{n-1}\}$. Then the evaluation of all corresponding polynomials $f^{(\delta-1)}(x)$ on $\mathcal{A}^{(\delta-1)}$ is an $\RS{n-\delta}{d}$ code.
\end{lemma}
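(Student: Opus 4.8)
The plan is to study the $\F{q}$-linear map $\phi\colon f(x)\mapsto f^{(\delta-1)}(x)$ on the space $\mathcal{P}$ of polynomials of degree at most $k-1$: I will show that $\phi$ maps $\mathcal{P}$ \emph{onto} the space of polynomials of degree at most $k-\delta-1$, note that $\mathcal{A}^{(\delta-1)}$ is a set of $n-\delta$ distinct elements of $\F{q}$, and then read off the conclusion from the definition of a Reed--Solomon code.

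First, a short induction on $s$ shows that $f^{(s)}(x)$ is a genuine polynomial with $\deg f^{(s)}(x)\le k-2-s$: for $s=0$, $\alpha_{0}$ is a root of $f(x)-f(\alpha_{0})$, so $x-\alpha_{0}$ divides it exactly and $f^{(0)}(x)\in\F{q}[x]$ has degree at most $(k-1)-1$; the inductive step repeats this with $f^{(s-1)}$ and $\alpha_{s}$ in place of $f$ and $\alpha_{0}$. Taking $s=\delta-1$ gives $\deg f^{(\delta-1)}(x)\le k-1-\delta=(k-\delta)-1$, so $\phi$ maps into the space of polynomials of degree at most $k-\delta-1$.

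Second, I would check that $\phi$ hits all of that space. The division in \defref{def:polred} lowers the degree by exactly one whenever its argument is non-constant, so $\deg f^{(\delta-1)}(x)=\deg f(x)-\delta$; applied to the monomials $x^{\delta},x^{\delta+1},\dots,x^{k-1}$ this produces polynomials of every degree $0,1,\dots,k-\delta-1$, and since $\phi$ is linear its image is the entire space of polynomials of degree at most $k-\delta-1$. (Equivalently, $\ker\phi$ is exactly the set of polynomials of degree at most $\delta-1$, which has dimension $\delta$, so the image has dimension $k-\delta$, matching the target.)

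Finally, $\mathcal{A}^{(\delta-1)}=\{\alpha_{\delta},\alpha_{\delta+1},\dots,\alpha_{n-1}\}$ consists of $n-\delta$ distinct elements of $\F{q}$, being a subset of the distinct evaluation points $\mathcal{A}$. By the two steps above, $\{\,\mathrm{ev}_{\mathcal{A}^{(\delta-1)}}(f^{(\delta-1)}):\deg f(x)\le k-1\,\}$ equals the evaluation of \emph{all} polynomials of degree at most $(k-\delta)-1$ at $n-\delta$ distinct points of $\F{q}$, which is by definition a Reed--Solomon code of length $n-\delta$ and dimension $k-\delta$; as an MDS code it has distance $(n-\delta)-(k-\delta)+1=n-k+1=d$, i.e.\ an $\RS{n-\delta}{d}$ code. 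The one point requiring care is the interplay of the first two steps: one must confirm the divisions are exact and that degrees drop by exactly one, so that the image of $\phi$ is the \emph{entire} degree-$(k-\delta-1)$ space rather than a proper subspace — this is precisely what makes the resulting code a full Reed--Solomon code rather than merely a subcode of one.
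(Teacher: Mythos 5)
Your proof is correct, and its first step (the inductive degree bound $\deg f^{(s)}(x) \le k-1-s$, obtained by observing that $\alpha_s$ is a root of $f^{(s-1)}(x)-f^{(s-1)}(\alpha_s)$ so the division is exact and drops the degree) is precisely the paper's entire proof of this lemma. Where you go beyond the paper is your second step: the paper stops after the degree bound, which only establishes that the evaluated image is \emph{contained in} an $\RS{n-\delta}{d}$ code, i.e., is a subcode of one. To justify the lemma as stated --- and the surrounding claim that this yields a bijective map onto an $\RS{n-\delta}{d}$ code --- one also needs that every polynomial of degree at most $k-\delta-1$ arises as some $f^{(\delta-1)}$, which you supply either by noting that the degree drops by exactly one at each stage (so the monomials $x^{\delta},\dots,x^{k-1}$ map to polynomials of all degrees $0,\dots,k-\delta-1$) or by the kernel/dimension count $\dim\ker\phi=\delta$. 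This surjectivity argument is genuinely absent from the paper's proof, though the omission is harmless for the decoding application, since a subcode of an RS code can still be list decoded with an RS decoder. You also make explicit that the $n-\delta$ remaining evaluation points are distinct, which the paper leaves implicit. In short: same core idea, but your version actually proves the full statement.
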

\begin{proof}
	It needs to be shown, that for any $f(x)$ with $\deg(f(x)) \leq k-1$ it holds that ${\deg(f^{(i)}(x)) \leq k-1-i}$. The polynomial $f'(x) = f(x) - f(\alpha_0)$ has a root at $\alpha_0$ and hence $f^{(0)}(x)$ with $f'(x) = f^{(0)}(x) (x-\alpha_0)$ exists. It follows that $ \deg(f^{(0)}(x)) = \deg(f'(x))-1 = \deg(f(x)) -1 \leq k-2$. The generalization to $f^{(i)}(x)$ follows by induction.
\end{proof}
Since most positions in a codeword are free of error, it makes sense to define a relation between the error vector of the shortened code and the original code.
\begin{lemma}
	Let $g_i(x) = f(x)+e_i$ with $e_i = 0, \; \forall \; i<\delta$. Then 
	\begin{align*}
	g^{(\delta-1)}_i(x) &= f^{(\delta-1)}(x) + e^{(\delta-1)}_i		
	\end{align*}
	with
	\begin{equation} \label{eq:errred}
	e_i^{(\delta-1)} = e_i \prod_{j=0}^{\delta-1} (x-\alpha_j)^{-1}
	\end{equation}
\end{lemma}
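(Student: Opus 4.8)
The plan is to prove this by induction on the shortening depth, mirroring the structure of the proof of \lemref{lemma:RSred}. The key observation is that each application of the reduction operation $f \mapsto f^{(s)}$ is $\F{q}$-linear in the polynomial and commutes with addition, so if $g_i(x) = f(x) + e_i$ with $e_i$ a constant (the $i$-th error value) that is zero for all $i < \delta$, then we can track how the constant term $e_i$ propagates through successive divided differences.

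First I would establish the base case. Consider $g_i^{(0)}(x) = \tfrac{g_i(x) - g_i(\alpha_0)}{x - \alpha_0}$. Since $g_i(x) = f(x) + e_i$, linearity gives $g_i^{(0)}(x) = f^{(0)}(x) + \tfrac{e_i - e_i}{x-\alpha_0} = f^{(0)}(x)$ when we reduce at position $0$; but more carefully, the reduction at step $s$ in \defref{def:polred} uses $\alpha_s$ and evaluates at $\alpha_s$, and the point is that the error at position $i \geq \delta$ is \emph{not} among the positions $0,\dots,\delta-1$ being removed, so it never gets subtracted off. Concretely, for the error "polynomial" $e_i(x) \equiv e_i$ (constant), applying $(\cdot)^{(0)}$ yields $\tfrac{e_i - e_i}{x - \alpha_0} = 0$ only if we were removing position $i$; since we are removing positions $0,\dots,\delta-1$ and $i \geq \delta$, what actually happens is that the divided-difference operator acts on the constant $e_i$ as $e_i \mapsto \tfrac{e_i}{x-\alpha_0} \cdot (\text{something})$ — wait, a constant evaluated anywhere is itself, so $\tfrac{e_i - e_i}{x-\alpha_0}=0$. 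The correct reading must be that the error vector, viewed position-by-position, transforms so that the component at position $i$ of the evaluation of $e^{(\delta-1)}(x)$ on $\mathcal{A}^{(\delta-1)}$ picks up the factor $\prod_{j=0}^{\delta-1}(x-\alpha_j)^{-1}$ evaluated at $\alpha_i$. So I would instead phrase the induction on the evaluation vectors: the reduction map on evaluations sends $(v_0,\dots,v_{n-1})$ to $\big(\tfrac{v_{s+1}-v_s}{\alpha_{s+1}-\alpha_s},\dots\big)$ style divided differences, and I would verify that applying this to the unit vector supported at position $i$ produces, at the surviving coordinate $i$, exactly the value $e_i \prod_{j=0}^{\delta-1}(\alpha_i - \alpha_j)^{-1}$.

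The inductive step then follows: assuming $g_i^{(s-1)}(x) = f^{(s-1)}(x) + e_i^{(s-1)}$ with $e_i^{(s-1)} = e_i \prod_{j=0}^{s-1}(x-\alpha_j)^{-1}$ (interpreted as the rational function whose evaluation at $\alpha_i$ gives the error component), applying the step-$s$ reduction and using linearity together with the fact that $e_i^{(s-1)}$ vanishes at $\alpha_s$ (because $e_i = 0$ for the removed indices, or because $\alpha_s \neq \alpha_i$ and the subtraction of the value at $\alpha_s$ is zero since $e_i$-support is disjoint from removed positions) yields $e_i^{(s)} = \tfrac{e_i^{(s-1)} - e_i^{(s-1)}(\alpha_s)}{x - \alpha_s} = \tfrac{e_i^{(s-1)}}{x-\alpha_s} = e_i \prod_{j=0}^{s}(x-\alpha_j)^{-1}$, closing the induction and giving \eqref{eq:errred} at $s = \delta-1$.

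The main obstacle I anticipate is notational/interpretational rather than mathematical: the quantity $e_i^{(\delta-1)}$ in \eqref{eq:errred} is written as a rational function of $x$, but $e_i$ is a scalar and the $g_i^{(\delta-1)}(x)$ are genuine polynomials of bounded degree — so one must be careful to clarify that $e^{(\delta-1)}_i$ denotes the contribution to the evaluation at the single point $\alpha_i$ (i.e. $x$ is meant to be specialized to $\alpha_i$), and to check that the divided-difference denominators $\prod_{j=0}^{\delta-1}(\alpha_i - \alpha_j)$ are nonzero, which holds because the $\alpha_j$ are distinct. Getting this bookkeeping right, and making the linearity-plus-vanishing argument airtight at each step so that no error value from a removed position leaks in, is where the real care is needed; the algebra itself is a one-line divided-difference recursion.
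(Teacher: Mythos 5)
Your proof is correct and follows essentially the same route as the paper's: compute the base case $g_i^{(0)}(x)$ directly, observing that the value subtracted at the removed position $\alpha_0$ carries the error $e_0=0$ (not $e_i$), so the surviving error term only picks up the factor $(x-\alpha_0)^{-1}$, and then close by induction on the shortening depth. Your resolution of the notational ambiguity --- that $e_i^{(\delta-1)}$ is to be read coordinate-wise, specialized at $x=\alpha_i$, with nonzero denominators because the $\alpha_j$ are distinct --- is exactly the interpretation the paper's proof implicitly relies on.
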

\begin{proof}
	For $\delta>0$, applying \defref{def:polred} gives
	\begin{align*}
	g_i^{(0)}(x) &= ((f(x)+e_i) - (f(\alpha_{0})+e_0))(x-\alpha_{0})^{-1} \\
	&= (f(x) - f(\alpha_{0}))(x-\alpha_{0})^{-1} + e_i (x-\alpha_0)^{-1} \\
	&= f^{(0)}(x) + e_i^{(0)} .
	\end{align*}
	Then, \eqref{eq:errred} follows by induction.
\end{proof}
With \lemref{lemma:RSred} and the Guruswami-Sudan decoder, all necessary tools for decoding up to radius $\tau_g$ of \eqref{eq:jblrc} are given. 

\begin{figure}
	\input{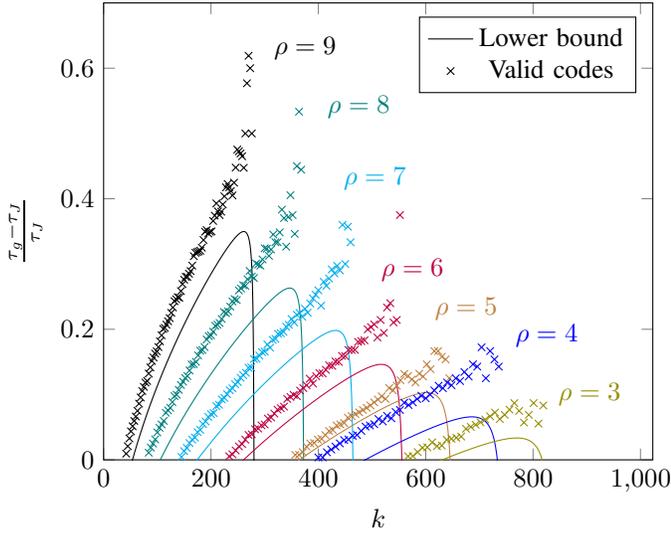}
	\vspace{-6pt}
	\caption{Relative gain in the decoding radius in relation to the Johnson radius for optimal LRCs of length $n=1023$ and repair set size $n_l=11$, where $\tau_g$ is given by \eqref{eq:jblrc} and $\tau_J$ denotes the Johnson radius \eqref{eq:johnsonradius} for the respective parameters.}
	\vspace{-10pt}
	\label{fig:relativegain}
\end{figure}

\figref{fig:relativegain} shows the relative gain for optimal LRCs of length $n=1023$ and repair set size $n_l=11$ for different values of $\rho$. For each $\rho$, a lower bound on the relative gain is given, i.e., the fraction by which our bound in \theoref{theo:ListDecodingLRCs} exceeds the Johnson radius of \eqref{eq:johnsonradius}. Each cross depicts the gain obtained for an $\LRCnp$ with $r|k$ and $(r+\rho-1)|n$, when considering the exact values for all ceiling and floor operations neglected in the derivation of~\eqref{eq:jblrc}. For example, consider the optimal $\LRC{15}{6}{3}{3}$ of distance $d=8$. Equation~\eqref{eq:jblrc} gives $\tau_g \approx 4.9$ and it follows that $4$ errors can be corrected, the same number as for the $\RS{15}{8}$ supercode. However, when considering the floor operation omitted in the proof of \lemref{lemma:sigma} by defining $\sigma' = \frac{n}{n_l} - \floor{\frac{\floor{\tau_g}}{t_l+1}}$, we see that the largest value for $\tau_g$ that fulfills 
\begin{equation}
	0 < (n-\sigma' n_l-\tau_g)^2 -(n-\sigma' n_l) (n-\sigma' n_l -d) 
\end{equation}
is $\tau_g \approx 5.52$ (the Johnson radius of an $[n=10,k=3,d=8]$ code). The gain is due to the fact that for any distribution of~$5$ errors, there will always be at least one repair set with at most $t_l=1$ errors and the code can be shortened by $n_l$ positions.

\subsection{Probabilistic Unique Decoding of Tamo-Barg Codes} \label{subsec:ProbTB}
\secref{subsec:probdec} has introduced a simple probabilistic unique decoder whose success probability depends on the likelihood of a miscorrection as well as the probability of the list sizes being equal to one. 
For RS codes, these probabilities are known to be small for a wide range of parameters \cite{McEliece1986,Cheung1988,McEliece2003}. \tabref{tab:sucprob} provides a lower bound on the success probabilities obtained by \eqref{eq:sucprob} for different $\LRCnp$ parameters. The columns labeled $\tau$ and $\tau_g$ give the bounds on the decoding radius from \eqref{eq:johnsonradius} and \eqref{eq:jblrc}, respectively. This shows that the computationally expensive case, where multiple repair sets have undetected error events and the local lists contain non-casual codewords, is highly unlikely and we can efficiently decode beyond the global Johnson radius.

\begin{table} 
	\caption{Success probabilities \eqref{eq:sucprob} of probabilistic unique decoding of~$\tau_g$ errors for different LRC parameters, where $\tau_g$ is given by \eqref{eq:jblrc} and $\tau_J$ denotes the Johnson radius \eqref{eq:johnsonradius} for the parameters.}\label{tab:sucprob}
	\vspace{-10pt}
\begin{center}
\begin{tabular}{CCCCCCCC}
n&k&r&\rho&d&\tau_J & \tau_g & P_{suc}\\ \hline
1023&99&3&9&669&421.21&469.01&94.63 \% \\
1023&129&3&9&559&334.03 &391.89 & 94.00 \% \\
1023&220&5&7&546&324.45&340.60&97.15 \% \\
1023&250&5&7&480&277.68&299.43&97.16 \% \\
1023&390&6&6&314&171.35&187.55&99.03 \% \\
1023&420&6&6&259&138.93&154.70&99.11 \% \\
1023&560&7&5&148&76.89&85.12&99.85 \% \\
1023&590&7&5&98&50.23&56.36&99.90 \%
\end{tabular}
\end{center}
\vspace{-14pt}
\end{table}
\vspace{-3pt}

\bibliographystyle{IEEEtran}
\bibliography{IEEEabrv,bib_lrctb}

%
%
%
%
%
%
%

\end{document}